\newtheorem{theorem}{Theorem}
\newtheorem{lemma}[theorem]{Lemma}
\theoremstyle{break}
\theoremstyle{break}
\theoremstyle{nonumberplain}
\newtheorem{proof}{Proof}
\newcites{supporting}{References}
\crefname{equations}{Eqs.}{Equations}
\Crefname{problem}{Problem}{Problems}
\Crefname{condition}{condition}{conditions}
\Crefname{inparcondition}{condition}{conditions}
\Crefname{algorithm}{Algorithm}{Algorithm}
\DeclareMathOperator{\offdiag}{offdg}
\DeclareMathAlphabet{\mathitbf}{OML}{cmm}{b}{it}
\renewcommand{\vec}[1]{\mathitbf{#1}}
\newcommand{\keyword}[1]{\textit{#1}}
\newcommand{\vv}{\vec{v}}
\newcommand{\csproblem}[1]{\textsc{#1}}
\begin{document}

\title{Extracting dynamical equations from experimental data is NP-hard}
\author{Toby S.\ Cubitt}
% \affiliation{Department of Mathematics, University of Bristol
%   University Walk, Bristol BS8 1TW, UK}
\affiliation{Departamento de An\'alisis Matem\'atico, Universidad Complutense
  de Madrid, Plaza de Ciencias 3, Ciudad Universitaria, 28040 Madrid, Spain}
\author{Jens Eisert}
% \affiliation{Institute for Physics and Astronomy, Potsdam University
%   14476 Potsdam, Germany}
 \affiliation{Dahlem Center for Complex Quantum Systems, Freie Universit{\"a}t Berlin, 14195 Berlin, Germany}
\author{Michael M.\ Wolf}
\affiliation{Zentrum Mathematik, Technische Universit{\"at} M{\"u}nchen,
  85748 Garching, Germany}
% \affiliation{Niels Bohr Institute, Blegdamsvej 17, 2100 Copenhagen, Denmark}

\begin{abstract}
  The behavior of any physical system is governed by its underlying dynamical
  equations. Much of physics is concerned with discovering these dynamical
  equations and understanding their consequences. In this work, we show that,
  remarkably, identifying the underlying dynamical equation from any amount of
  experimental data, however precise, is a provably computationally hard
  problem (it is NP-hard), both for classical and quantum mechanical systems.
  As a by-product of this work, we give complexity-theoretic answers to both
  the quantum and classical embedding problems, two long-standing open
  problems in mathematics (the classical problem, in particular, dating back
  over 70 years).
\end{abstract}

\maketitle

A large part of physics is concerned with identifying the dynamical equations
of physical systems and understanding their consequences. But how do we deduce
the dynamical equations from experimental observations? Whether deducing the
laws of celestial mechanics from observations of the planets, determining
economic laws from observing monetary parameters, or deducing quantum
mechanical equations from observations of atoms, this task is clearly a
fundamental part of physics and, indeed, science in general. The task of
identifying dynamical equations from experimental data also turns out to be
closely related, in both the classical and quantum mechanical cases, to
long-standing open problems in mathematics (in the classical case, dating back
to 1937 \cite{Elfving}).

In this letter, we give complexity-theoretic solutions to both these open
problems. And these results lead to a surprising conclusion: regardless of how
much information one obtains through measuring a system, extracting the
underlying dynamical equations from those measurement data is in general an
intractable problem. More precisely, it is \keyword{NP-hard}. This means that
any computationally efficient method of determining which dynamical equations
are consistent with a set of measurement data would solve the (in)famous P
versus NP problem \cite{P=NP}, by implying that P$=$NP. Thus, if P$\neq$NP, as
is widely believed, there \emph{cannot} exist an efficient method of deducing
dynamical equations from any amount of experimental data. We also prove the
other direction: by reducing to an NP-complete problem we show that, if
P$=$NP, then there does exist an efficient algorithm for extracting dynamical
equations from experimental data. Thus the question of whether there exists an
efficient method for determining dynamical equations from measurement data is
equivalent to the P versus NP question.

Note that we are not restricting ourselves here to fundamental theories, where
other theoretical considerations may impose simplifications on the desired
form of the equations. We are also considering effective dynamical equations,
as encountered in the majority of experiments, where the full range of
possible dynamical equations can in principle be observed.

In the classical setting, the problem of extracting dynamical models from
experimental data has spawned an entire field known as \keyword{system
  identification} \cite{SystemIdentification}, which forms part of control
engineering -- after all, the precise knowledge of the dissipation is crucial
for actually understanding what control steps to apply. In the quantum case,
interest in understanding quantum dynamics, especially externally-induced
noise and decoherence, has been spurred on by efforts to develop quantum
information processing technology
\cite{implementations_review,Nielsen+Chuang}. Indeed, the primary goal of many
experiments is precisely to characterize and understand the dynamics of a
specific quantum system \cite{TeleportNature,Cory,Blatt,OBrien,Howard}. This
is precisely the task that we show to be computationally intractable in
general (assuming P$\neq$NP), both in quantum mechanics and in classical
physics.

\begin{figure}[htbp]
  \includegraphics[width=6cm]{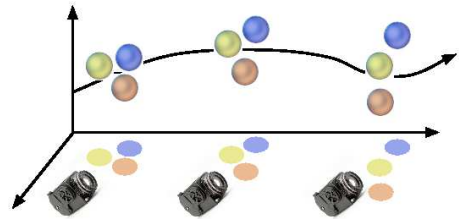}
  \caption{In an experiment, we can gather snapshots of the state of a
    physical system at various points in time. To understand the physics
    behind the system's behavior, we must reconstruct the underlying dynamical
    equations from the snapshots.}
  \label{fig:snapshots}
\end{figure}

\paragraph{Results.}
Let us make the task more concrete. We will throughout consider \emph{open
  system dynamics} which takes external influences and noise into account.
Recall that in classical mechanics, the most general state of a system is
described by a probability distribution $\vec{p}$ over its state space, which
for simplicity we will take to be finite-dimensional. Its evolution is then
described by a \keyword{master equation}, whose form is determined by the
system's Liouvillian, corresponding to a matrix $L$, as
%\begin{equation}\label{eq:classical_master_equation}
  $\dot{\vec{p}} = L\vec{p}$.
%\end{equation}
The Liouvillian expresses interactions, conservation laws, external noise
etc., in short, it describes the underlying physics. In order for the
probabilities to remain positive and sum to one, the elements $L_{i,j}$
must obey two simple conditions \cite{Norris}: % (i)~the off-diagonal
% entries of the matrix must be positive, and (ii)~each row of the matrix must
% sum to~0; or, expressed mathematically:
\begin{inparaenum}[(i)]
  \item $L_{i\neq j} \geq 0$, \label[inparcondition]{cond:Q_positivity}
  \item $\sum_i L_{i,j} = 0$. \label[inparcondition]{cond:Q_normalisation}
\end{inparaenum}

In the quantum setting, the density matrix $\rho$ plays the analogous role to
that of the classical probability distribution, but the quantum master
equations are still determined by a Liouvillian:
\begin{equation}\label{eq:quantum_master_equation}
  \dot{\rho} = \mathcal{L}(\rho).
\end{equation}
In his seminal 1976 paper \cite{Lindblad}, Lindblad established the general
form that any quantum Liouvillian must take if it is to generate a
completely-positive trace-preserving evolution (so that density matrices
always evolve into density matrices, directly analogous to probabilities
remaining positive and normalised in the classical case):
\begin{equation}\label{eq:Lindblad_form}
  \mathcal{L}(\rho)
  = i[\rho,H] + \sum_{\alpha,\beta} G_{\alpha,\beta}\Bigl(
      F_\alpha\rho F_\beta^\dagger
      - \frac{1}{2}\{F_\beta^\dagger F_\alpha,\rho\}_+
    \Bigr).
\end{equation}
Here, $H$ is the Hamiltonian of the system, $G$ is a positive semi-definite
matrix and, along with the matrices $F_\alpha$, describes decoherence
processes. ($[.,.]$ and $\{.,.\}_+$ denote
respectively the commutator and anti-commutator.) These master equations of
\keyword{Lindblad form} have become the mainstay of the dynamical theory of
open quantum systems, and are crucial to the description of quantum mechanics
experiments \cite{Carmichael}. In principle, the Liouvillian could itself be
time-dependent, describing a system whose underlying physics is changing over
time. Here, we restrict our attention to the problem of
finding a time-independent Liouvillian, as this is a good assumption for
experiments in which external parameters are held constant. The more
general time-dependent problem is expected to be harder still.

What is the best possible data that an experimentalist can conceivably gather
about an evolving system? At least in principle, they can repeatedly prepare
the system in any chosen initial state, allow it to evolve for some period of
time, and then perform any measurement. In fact, for a careful choice of
initial states and measurements, it is possible in this way to reconstruct a
complete ``snapshot'' of the dynamics at any particular time. In the quantum
setting, this technique is known as \keyword{quantum process tomography}
\cite{Nielsen+Chuang}. Quantum process tomography is now routinely carried out
in many different physical systems, from NMR \cite{TeleportNature,Cory} to
trapped ions \cite{Blatt}, from photons \cite{OBrien} to solid-state devices
\cite{Howard}.

A tomographic snapshot tells us \emph{everything} there is to know about the
evolution at the time $t$ when the snapshot was taken. Each snapshot is a
dynamical map $\mathcal{E}_t$, which describes how the initial state,
$\vec{p}_0$ or $\rho_0$, is transformed into $\vec{p}(t) =
\mathcal{E}_t(\vec{p}_0)$ or $\rho(t) = \mathcal{E}_t(\rho_0)$. Any
measurement at time $t$ can be viewed as an imperfect version of process
tomography, giving partial information about the snapshot, and the outcome of
any measurement of the system at time $t$ can be predicted once
$\mathcal{E}_t$ is known. Thus the most complete data that can be gathered
about a system's dynamics consists of a set of snapshots taken at a sequence
of different points in time.

Let us concentrate first on the quantum case. Quantum dynamical maps
$\mathcal{E}_t$ are described mathematically by completely positive,
trace-preserving (CPT) maps \cite{Nielsen+Chuang} (also known as
\keyword{quantum channels}). The problem of deducing the dynamical equations
from measurement data is then one of finding a Lindblad master
equation~\eqref{eq:quantum_master_equation} that accounts for the CPT
snapshots $\mathcal{E}_t$. This is essentially the converse problem to that
considered by Lindblad \cite{Lindblad,Gorini}. Given its relevance, it is not
surprising that numerous heuristic numerical techniques have been applied to
tackle this problem \cite{Cory,Howard2}. But unfortunately these give no
guarantee as to whether a correct answer has been found. Our results show that
the failure of these heuristic techniques is an inevitable consequence of the
inherent intractability of the problem.

Before tackling the problem of finding dynamical equations, let us start by
considering an apparently much simpler question: given a \emph{single}
snapshot $\mathcal{E}$, does there even \emph{exist} a Liouvillian
$\mathcal{L}$ that could have generated it? Not every CPT map $\mathcal{E}$
can be generated by a Lindblad master equation
\cite{divisibility,markovianity}, so the question of the existence of such a
Liouvillian (\cref{eq:Lindblad_form}) is a well-posed problem. A dynamical map
that \emph{is} generated by a Lindblad form Liouvillian is said to be
\keyword{Markovian}, so this problem is sometimes referred to as the
\keyword{Markovianity problem}. Non-Markovian snapshots \cite{Structured} can
arise if the environment carries a memory of the past, so that the system's
evolution cannot be described by \cref{eq:quantum_master_equation} in the
first place, as that assumes the system is sufficiently isolated from its
environment for its dynamics to be described independently.

% Returning, then, to the problem of deciding whether there exists any
% Liouvillian that generates a given snapshot $\mathcal{E}$ (the Markovianity
% problem),
It is important to note that, for the results to apply to real experimental
data, we must take into account the fact that a snapshot can only ever be
measured up to some experimental error. We should therefore be satisfied if we
can answer the question for some approximation $\mathcal{E'}$ to the measured
snapshot $\mathcal{E}$, as long as the approximation is accurate up to
experimental error. Mathematically, this is known as a \keyword{weak
  membership} formulation of the problem.

To address the Markovianity problem, we will require some basic concepts from
complexity theory. Recall that P is the class of computational problems that
can be solved efficiently on a classical computer. The class NP instead only
requires an efficient verification of solutions, and contains problems that
are believed to be impossible to solve efficiently, such as the famous
\csproblem{3SAT} problem, and the travelling salesman problem. A problem is
\keyword{NP-hard} if solving it efficiently would also lead to efficient
solutions to \emph{all} other NP problems. A problem that is both NP-hard and
is also itself in the class NP is said to be \keyword{NP-complete}. The
\csproblem{3SAT} and travelling salesman problems are both examples of
NP-complete problems, whereas the problem of factoring large integers is an
example of an NP problem that is believed not to be NP-hard
\cite{Computers+Intractability}.

Rather than considering \csproblem{3SAT}, it is more convenient here to
consider the equivalent \csproblem{1-in-3SAT} problem, into which
\csproblem{3SAT} can easily be transformed \cite{Computers+Intractability},
and which is therefore also NP-complete. We will show that any instance of the
\csproblem{1-in-3SAT} problem can be efficiently transformed into an instance
of the Markovianity problem (see also \cite{supporting}), thus proving that
the latter is at least as hard as \csproblem{1-in-3SAT}; any efficient
procedure for determining whether a snapshot has some underlying Liouvillian
would immediately imply an efficient procedure for solving
\csproblem{1-in-3SAT}. But \csproblem{1-in-3SAT} is NP-complete, so this would
immediately give an efficient algorithm for solving any NP-problem, implying
P$=$NP.
However, as discussed above, the Markovianity problem is just a special case
of the more general---and more important---problem of extracting the
underlying dynamical equations from experimental data. If P$\neq$NP, as is
widely believed, then there \emph{cannot exist a computationally efficient
  method of deducing dynamical equations from any amount of experimental
  data.}

We can go further than this. Through the relation to NP-complete problems such
as \csproblem{1-in-3SAT}, we can reduce the Markovianity problem to the task
of solving an NP-complete problem. This gives the first rigorous, provably
correct algorithm for extracting the underlying dynamical equations from a set
of experimental data, albeit one that is necessarily inefficient for systems
with more than a few degrees of freedom (otherwise we would have proven
P$=$NP!).

We have focussed so far on the more complex case of quantum systems, and one
might perhaps expect that systems governed by classical physics would be
easier to analyse. However, essentially the same argument proves that exactly
the same results hold for classical systems, too. (See also
\cite{supporting}.)

\paragraph{The technical argument.}
It is convenient to represent a snapshot $\mathcal{E}$ of the dynamics of a
quantum system (a CPT map) by a matrix $E$,
\begin{equation}
  E_{i,j;k,l} =
  \tr\bigl[\mathcal{E}\bigl(\ketbra{i}{j}\bigr)\cdot\ketbra{k}{l}\bigr]
\end{equation}
(the row- and column-indices of $E$ are the double-indices $i,j$ and $k,l$,
respectively). Looked at this way, each measurement that is performed pins
down the values of some of these matrix elements \cite{Nielsen+Chuang}. A
snapshot of a Markovian evolution is then one with a Liouvillian $\mathcal{L}$
(represented in the same way by a matrix $L$) such that $E = e^L$,
% \footnote{We are taking the exponential $E=e^L$ of a matrix $L$. For
% diagonalisable matrices, this is defined by exponentiating its eigenvalues,
% and is extended by continuity to non-diagonalisable matrices.}
and, for all times $t\geq 0$, $E_t=e^{Lt}$ are also valid quantum
dynamical (CPT) maps.
% (i.e.\ they are CPT maps).

The Markovianity problem can be transformed into an equivalent question about
the Liouvillian. Inverting the relationship $E=e^L$, we have $L = \log E$.
There are, however, infinitely many possible branches of the logarithm, since
the phases of complex eigenvalues of $E$ are only defined modulo $2\pi i$. The
problem then becomes one of determining whether \emph{any one of these} is a
valid Liouvillian (i.e.\ of Lindblad form~\eqref{eq:Lindblad_form}).
This translates into the following necessary and sufficient conditions on the
matrix $L$ \cite{markovianity}:
\begin{enumerate}
\item $L^\Gamma$ is Hermitian, where $\Gamma$ is defined by its action on
  basis elements: $\ketbra{i,j}{k,l}^\Gamma =
  \ketbra{i,k}{j,l}$. \label[inparcondition]{cond:L_Hermiticity}
\item $L$ fulfils the normalisation $\bra{\omega}L=0$, where
  $\ket{\omega}=\sum_i\ket{i,i}/{\sqrt{d}}$ is maximally entangled.
  \label[inparcondition]{cond:L_normalisation}
\item $L$ satisfies \keyword{conditional complete positivity} (ccp), i.e.
  \label[inparcondition]{cond:L_ccp}
  $(\id-\omega)L^\Gamma(\id-\omega)\geq 0,\,\,\omega=\proj{\omega}$.
\end{enumerate}
All branches $L_m$ of the logarithm can be obtained by adding integer
multiples of $2\pi i$ to the eigenvalues of the principle branch $L_0$, so we
can parametrise all the possible branches by a set of integers $m_c$:
\begin{align}
     L_m &= \log E =
      L_0 + \sum_c m_c A^{(c)},\label{eq:logs}\\
            A^{(c)} &= 2\pi i \bigl(\ketbra{l_c}{r_c}
             - \mathbbm{F}(\ketbra{l_c}{r_c})\bigr),\label{eq:A_c}
\end{align}
with $\ket{l_c}$ and $\bra{r_c}$ the left- and right-eigenvectors of $E$.
$\mathbbm{F}$ is the operation
$\mathbbm{F}(\ketbra{i,j}{k,l})=\ketbra{j,i}{l,k}^{\!*}$, where $^*$ denotes
the complex-conjugate, and we have already restricted the parametrisation to
logarithms that satisfy \cref{cond:L_Hermiticity}.

We will prove that this \keyword{Liouvillian problem} is NP-hard, by showing
how to encode any instance of the NP-complete \csproblem{1-in-3SAT} problem
into it. Recall that the task in \csproblem{1-in-3SAT} is to determine whether
a given logical expression can be satisfied or not. The expression is made up
of ``clauses'', all of which must be satisfied simultaneously. Each clause
involves three boolean variables (variables with values ``true'' or
``false''), which can be represented by integers $m_c=0,1$.
% Note that by representing the boolean variables by integers $m_c=0,1$, a
% \csproblem{3SAT} clause containing, say, the variables $m_i$, the negation
% of $m_j$, and $m_k$ can be expressed as
% \begin{equation}\label{eq:3SAT_clause_constraint}
% $m_i + (1-m_j) + m_k \geq 1$.
% \end{equation}
%
% The task in \csproblem{1-in-3SAT} is to determine whether or not a given
% boolean logical expression can be satisfied, where the expression consists
% of clauses, all of which must be satisfied simultaneously. However,
In \csproblem{1-in-3SAT}, a clause is satisfied if and only if \emph{exactly
  one} of the variables appearing in the clause is true (as opposed to
\csproblem{3SAT}, in which \emph{at least one} must be true), and no boolean
negation is necessary. Note that, in terms of integer variables $m_c$, a
\csproblem{1-in-3SAT} clause containing variables $m_i$, $m_j$ and $m_k$ can
be expressed as
\begin{subequations}\label[equations]{eq:1-in-3SAT_constraints}
\begin{gather}
  1 \leq m_i + m_j + m_k \leq 1, \label{eq:clause_constraints}\\
  0 \leq m_i,m_j,m_k \leq 1. \label{eq:boolean_constraints}
\end{gather}
\end{subequations}

\emph{If the matrices appearing in
  \cref{cond:L_Hermiticity,cond:L_normalisation,cond:L_ccp} were diagonal},
\cref{cond:L_ccp} would give us a concise way of writing the coefficients and
constants of a set of inequalities such as \cref{eq:1-in-3SAT_constraints} in
the diagonal elements. However, the problem we are facing here is
significantly more challenging: diagonal matrices will never satisfy
\cref{cond:L_Hermiticity,cond:L_normalisation}, and the matrices $L_0$ and
$A^{(c)}$ cannot be chosen independently, since they are determined by the
eigenvectors and eigenvalues of a single matrix $E$.

These substantial obstacles can be overcome, however. The key step in encoding
the above boolean constraints in a quantum Liouvillian is to restrict our
attention to matrices $L_0$ and $A^{(c)}$ with the following special forms:
\begin{gather}
  L_0 = 2\pi\sum_{i,j} Q_{i,j}\ketbra{i,i}{j,j}
         + 2\pi\sum_{i\neq j}P_{i,j}\ketbra{i,j}{i,j},
         \label{eq:special_case_L0}\\
  A^{(c)} = 2\pi\sum_{i\neq j} B^{(c)}_{i,j}\ketbra{i,i}{j,j},
         \label{eq:special_case_Ac}
\end{gather}
%(the factors of $2\pi$ are included here for convenience),
with coefficient matrices
\begin{align}
  Q &= \sum_r\vv_r^{\vphantom{T}}\vv_r^T
         \otimes\begin{pmatrix}1&1\\1&1\end{pmatrix}
         \otimes\begin{pmatrix}
           k+\lambda_r&\lambda_r\\\lambda_r&k+\lambda_r
         \end{pmatrix}
         \notag\\
       &\qquad
       +\sum_{c}\vv_c^{\vphantom{T}}\vv_c^T\otimes
         \begin{pmatrix}\phantom{-}1&-1\\-1&\phantom{-}1\end{pmatrix}
         \otimes\begin{pmatrix}
           k&-\frac{1}{3}\\\frac{1}{3}&\phantom{-}k
         \end{pmatrix}
         \label{eq:special_case_Q}\\
       &\qquad
       +\sum_{c'}\vv_{c'}^{\vphantom{T}}\vv_{c'}^T\otimes
         \begin{pmatrix}\phantom{-}1&-1\\-1&\phantom{-}1\end{pmatrix}
         \otimes\begin{pmatrix}k&0\\0&k\end{pmatrix}\notag,\\
  B^{(c)} &= \vv_c^{\vphantom{T}}\vv_c^T\otimes
         \begin{pmatrix}\phantom{-}1&-1\\-1&\phantom{-}1\end{pmatrix}
         \otimes
         \begin{pmatrix}\phantom{-}0&1\\-1&0\end{pmatrix}.
         \label{eq:special_case_Bc}
\end{align}
The sets of real vectors $\{\vv_r\}$ and $\{\vv_c,\vv_{c'}\}$ should each form
an orthogonal basis, and the parameters $k$, $\lambda_r$ and $P_{i,j}$ are
also real. The advantage of this restriction is that the action of the
$\Gamma$ operation on matrices of this form is somewhat easier to analyse, as
can readily be seen from its definition (given in \cref{cond:L_Hermiticity},
above).

It is a simple matter to verify that the eigenvalues and eigenvectors of $L_0$
and $B^{(c)}$ do indeed parametrise the logarithms of a matrix $E$, and that
the Hermiticity and normalisation conditions
\cref{cond:L_Hermiticity,cond:L_normalisation} necessary for $L$ to be a
valid quantum Liouvillian are indeed satisfied by the forms given in
\cref{eq:special_case_L0,eq:special_case_Ac,eq:special_case_Q,eq:special_case_Bc},
as long as $\vec{w}^TQ = 0$ and $\diag(P)^\Gamma$ is Hermitian (where for
$d$--dimensional $Q$, $\vec{w}=(1,1,\dots,1)^T/\sqrt{d}$, and $\diag(P)$
denotes the $d^2$\nobreakdash--dimensional matrix with $P_{i,j}$ down its main
diagonal). Furthermore, the ccp condition \cref{cond:L_ccp} reduces for
this special form to the pair of conditions:
\begin{subequations}
\begin{gather}
  \sum_c B^{(c)}_{i,j}\,m_c + Q_{i,j} \geq 0 \qquad i\neq j,
  \label{eq:special_case_ccp1}\\
  \left(\id-\vec{w}\vec{w}^T\right)\left(\diag Q + \offdiag P\right)
    \left(\id-\vec{w}\vec{w}^T\right) \geq 0,
  \label{eq:special_case_ccp2}
\end{gather}
\end{subequations}
where $M = (\diag Q + \offdiag P)$ denotes the $d$\nobreakdash--dimensional
matrix with diagonal elements $M_{i,i}=Q_{i,i}$ and off-diagonal elements
$M_{i\neq j}=P_{i,j}$.

We now encode the coefficients of the 1-in-3SAT problem from
\cref{eq:1-in-3SAT_constraints} into the elements of $\vv_c$. For each clause
in \cref{eq:clause_constraints}, write a ``1'' in a new element of $\vv_i$,
$\vv_j$ and $\vv_k$, and a ``0'' in the corresponding element of all other
$\vv_c$'s. For each $\vv_c$, write a ``1'' in a new element of the vector,
writing a ``0'' in the corresponding element of all the other $\vv_c$'s (these
elements will be used to restrict each $m_c$ to the values 0 or 1). Finally,
extend the vectors so that they are mutually orthogonal and have the same
length, which can always be done. One can now verify directly that, by
choosing appropriate $\vv_r$, \cref{eq:1-in-3SAT_constraints} are equivalent
to the 1-in-3SAT inequalities of \cref{eq:special_case_ccp2}. Furthermore,
\cref{cond:L_Hermiticity,cond:L_normalisation} are always satisfied. (See
\cite{supporting} for more detail.) Thus we have succeeded in encoding
1-in-3SAT into the Liouvillian problem. As the latter is equivalent to the
Markovianity problem, this proves that the Markovianity problem is itself
NP-hard.
This construction easily generalizes to the original question of
\emph{finding} which dynamical equations (if any) could have generated a given
set of snapshots \cite{supporting}: any method of finding dynamical equations
consistent with the data would obviously also answer the question of whether
these exist, allowing us to solve all NP problems.

Note that, on the positive side, by carrying out a brute-force search for
solutions of the corresponding satisfiability problem (in the case considered
above, this is \csproblem{1-in-3SAT}, but more generally it is an integer
semi-definite constraint problem defined by
\cref{cond:L_ccp,cond:L_normalisation,cond:L_Hermiticity}, which is obviously
in NP), we immediately obtain an algorithm for extracting dynamical equations
from measurement data that is guaranteed to give the correct answer. Although
such an algorithm will not work in practice even for moderately complex
systems, the NP-hardness proves that we cannot hope for an efficient algorithm
(unless P$=$NP). And it \emph{can} be applied to systems with few degrees of
freedom, making it immediately applicable at least to many current quantum
experiments.

%\paragraph*{The Classical Problem}
What of the classical setting? The classical analogue of the Markovianity
problem is the so-called \keyword{embedding problem} for stochastic matrices,
originally posed in 1937 \cite{Elfving}. Despite considerable effort
\cite{Kingman} the general problem has, however, remained open until now
\cite{Mukherjea}. Strictly speaking, the quantum result does not directly
imply anything about the classical problem. Nevertheless, the arguments we
have given in the more complicated quantum setting can straightforwardly be
adapted to the classical embedding problem \cite{supporting}, proving that
this is NP-hard, too. (See \cite{supporting} for details.)
% in the classical setting. In addition to the implications for physics, this
% also finally solves the long-standing embedding problem, in the sense that
% it reduces it to P$=$NP.

\paragraph{Discussion.}
On the one hand, this work leads to a rigorous algorithm for extracting the
underlying dynamical equations from experimental data. For systems with few
effective degrees of freedom, as encountered for example in all quantum
tomography experiments to date \cite{TeleportNature,Cory,Blatt,OBrien,Howard},
this gives the first practical and provably correct algorithm for this key
task. For systems with many degrees of freedom, the algorithm is necessarily
inefficient, with a run-time that scales exponentially. But our
complexity-theoretic NP-hardness results show that we cannot hope for a
polynomial-time algorithm. Note also that the hardness cannot be attributed to
allowing high-energy processes in the dynamics (high branches of the
logarithm), as the reduction from the \csproblem{1-in-3SAT} problem only needs
low-energy dynamics ($m$ is restricted to $0$ or $1$).

On the other hand, our results also prove that for general systems, deducing
the underlying dynamical equations from experimental data is computationally
intractable, unless one can show that P$=$NP. This hardness result is true
whether the system is quantum or classical, and regardless of how much
experimental data we gather about the system.
These results also imply that various closely related problems, such as
finding the dynamical equation that best approximates the data, or testing a
dynamical model against experimental data, are also intractable in general, as
any method of solving these problems could easily be used to solve the
original problem.

Experience would seem to suggest that, whilst general classical and quantum
dynamical equations may be impossible to deduce from experimental data, the
dynamics that we actually encounter are typically much easier to analyse. Our
results pose the interesting question of why this should be, and whether there
is some general physical principle that rules out intractable dynamics.

\paragraph{Acknowledgements.} The authors would like to thank J.\ I.\ Cirac,
A.\ Winter, C.\ Goldschmidt, and J.\ Martin for valuable discussions. This
work has been supported by a Leverhulme early career fellowship, by the EU
(QAP, QESSENCE, MINOS, COMPAS, COQUIT, QUEVADIS), by Spanish grants QUITEMAD,
I-MATH, and MTM2008-01366, by the EURYI, the BMBF (QuOReP), and the Danish
Research Council (FNU).

\bibliography{Markov_complexity}

\onecolumngrid  % force rebalancing
\clearpage
\twocolumngrid

\appendix
\setcounter{page}{1}
\setcounter{equation}{0}

\section*{\large Supporting Material}

\subsection*{Encoding \csproblem{3SAT} in a Liouvillian}
\noindent We start from the special form for the matrices $L_0$ and $A^{(c)}$
defined in the main text:
\begin{gather}
  L_0 = 2\pi\sum_{i,j} Q_{i,j}\ketbra{i,i}{j,j}
         + 2\pi\sum_{i\neq j}P_{i,j}\ketbra{i,j}{i,j},
         \label{eq:epaps:special_case_L0}\\
  A^{(c)} = 2\pi\sum_{i\neq j} B^{(c)}_{i,j}\ketbra{i,i}{j,j},
         \label{eq:epaps:special_case_Ac}
\end{gather}
%(the factors of $2\pi$ are included here for convenience),
with
\begin{align}
  Q &= \sum_r\vv_r^{\vphantom{T}}\vv_r^T
         \otimes\begin{pmatrix}1&1\\1&1\end{pmatrix}
         \otimes\begin{pmatrix}
           k+\lambda_r&\lambda_r\\\lambda_r&k+\lambda_r
         \end{pmatrix}
         \notag\\
       &\qquad
       +\sum_{c}\vv_c^{\vphantom{T}}\vv_c^T\otimes
         \begin{pmatrix}\phantom{-}1&-1\\-1&\phantom{-}1\end{pmatrix}
         \otimes\begin{pmatrix}
           k&-\frac{1}{3}\\\frac{1}{3}&\phantom{-}k
         \end{pmatrix}
         \label{eq:epaps:special_case_Q}\\
       &\qquad
       +\sum_{c'}\vv_{c'}^{\vphantom{T}}\vv_{c'}^T\otimes
         \begin{pmatrix}\phantom{-}1&-1\\-1&\phantom{-}1\end{pmatrix}
         \otimes\begin{pmatrix}k&0\\0&k\end{pmatrix}\notag,\\
  B^{(c)} &= \vv_c^{\vphantom{T}}\vv_c^T\otimes
         \begin{pmatrix}\phantom{-}1&-1\\-1&\phantom{-}1\end{pmatrix}
         \otimes
         \begin{pmatrix}\phantom{-}0&1\\-1&0\end{pmatrix}.
         \label{eq:epaps:special_case_Bc}
\end{align}
Recall that the ccp condition~(iii), % cond:L_ccp,
given on page~3 of the main text, % cond:L_ccp,
reduces for this special form to the pair of conditions:
\begin{subequations}
\begin{gather}
  \sum_c B^{(c)}_{i,j}\,m_c + Q_{i,j} \geq 0 \qquad i\neq j,
  \label{eq:epaps:special_case_ccp1}\\
  \left(\id-\vec{w}\vec{w}^T\right)\left(\diag Q + \offdiag P\right)
    \left(\id-\vec{w}\vec{w}^T\right) \geq 0.
  \label{eq:epaps:special_case_ccp2}
\end{gather}
\end{subequations}

As explained in the main text, we encode a \csproblem{1-in-3SAT} problem into
these matrices by writing the clauses into the vectors $\vv_c$. Denote the
total number of variables and clauses by $V$ and $C$, respectively. For each
clause $n$ involving the $i^\mathrm{th}$, $j^\mathrm{th}$ and $k^\mathrm{th}$
boolean variables, write a ``1'' in the $n^\mathrm{th}$ element of $\vv_i$,
$\vv_j$ and $\vv_k$, and write a ``0'' in the same element of all the other
$\vv_c$'s. Now, for each $\vv_c$, write a ``1'' in its $C+c$'th element,
writing a ``0'' in the corresponding element of all the other vectors.
Finally, extend the vectors so that they are mutually orthogonal and have the
same length, which can always be done. This produces vectors with at most
$C+2V$ elements.

This procedure encodes the coefficients for the \csproblem{1-in-3SAT}
inequalities into some of the on-diagonal $4\times 4$ blocks of the $B^{(c)}$
matrices. Specifically, if we imagine colouring $B^{(c)}$ in a chess-board
pattern (starting with a ``white square'' in the top-leftmost element), then
the coefficients for one \csproblem{1-in-3SAT} constraint from
Eq.~(7) % eq:clause_constraints,eq:boolean_constraints
of the main text are duplicated in all the ``black squares'' of one diagonal
$4\times 4$ block.

Colouring $Q$ in the same chess-board pattern, the contribution to its ``black
squares'' from the first term of \cref{eq:epaps:special_case_Q} is generated
by the off-diagonal elements $\lambda_r$:
\begin{equation}
  \label{eq:epaps:symmetric_matrix}
  \sum_r\vec{v}_r^{\vphantom{T}}\vec{v}_r^T\otimes
  \begin{pmatrix}1&1\\1&1\end{pmatrix}\otimes
  \begin{pmatrix}\cdot&\lambda_r\\\lambda_r&\cdot\end{pmatrix}
  = S\otimes
    \begin{pmatrix}1&1\\1&1\end{pmatrix}\otimes
    \begin{pmatrix}\cdot&1\\1&\cdot\end{pmatrix}.
\end{equation}

Since $\vv_r$ and $\lambda_r$ can be chosen freely, the first tensor factor in
this expression is just the eigenvalue decomposition of an arbitrary real,
symmetric matrix $S$. If we choose the first $C$ diagonal elements of $S$ to
be $1/2$, and choose the next $V$ diagonal elements to be $5/6$, then it is
straightforward to verify that the equations in the ccp condition of
\cref{eq:epaps:special_case_ccp1} corresponding to the ``black squares'' in
on-diagonal $4\times 4$ blocks are given by
\begin{equation}
  \label[equations]{eq:epaps:inequalities}
  \begin{aligned}
    m_i,m_j,m_k \geq -\frac{1}{2},&\quad -m_i,m_j,m_k\geq -\frac{7}{6},\\
    m_i + m_j + m_k \geq \frac{1}{2}, &\quad
    -m_i - m_j - m_k \geq -\frac{3}{2},
  \end{aligned}
\end{equation}
for all $m_i$, $m_j$, $m_k$ appearing together in a \csproblem{1-in-3SAT}
clause. Since the $m_c$ are integers, these inequalities are exactly
equivalent to the \csproblem{1-in-3SAT} constraints given in Eq.~(7) of the
main text.% eq:clause_constraints,eq:boolean_constraints

We have successfully encoded the correct coefficients and constants into
certain matrix elements of $B^{(c)}$ and $Q$. But all the other elements of
these matrices also generate inequalities via
\cref{eq:epaps:special_case_ccp1}. To ``filter out'' these unwanted
inequalities, we choose the remaining diagonal elements and all off-diagonal
elements of the symmetric matrix $S$ to be large and positive, thereby
ensuring all unwanted inequalities are always trivially satisfied.

$L_m$, as constructed so far, will not satisfy the normalisation
condition~(ii) % cond:L_normalisation
given on page~3 of the main text. For that, we need to ensure that
$\vec{w}^TQ=0$, i.e., that the columns of $Q$ sum to zero. We use the ``white
squares'' of $Q$, generated by the diagonal elements in the third tensor
factors of \cref{eq:epaps:special_case_Q}, to renormalise these column sums to
zero. Recall that both $\{\vv_r\}$ and $\{\vv_c,\vv_{c'}\}$ are complete sets
of mutually orthogonal vectors. Rearranging \cref{eq:epaps:special_case_Q},
$Q$ is therefore given by
\begin{multline}
  \label{eq:epaps:constructed_Q}
  Q = k\id + S\otimes\begin{pmatrix}1&1\\1&1\end{pmatrix}
             \otimes\begin{pmatrix}1&1\\1&1\end{pmatrix}\\
      + \sum_c\vv_c^{\vphantom{T}}\vv_c^T
        \otimes\begin{pmatrix}
          \phantom{-}1&-1\\-1&\phantom{-}1
        \end{pmatrix}
        \otimes\begin{pmatrix}
          0&-\frac{1}{3}\\\frac{1}{3}&\phantom{-}0
        \end{pmatrix},
\end{multline}
where $\id$ is the identity matrix. Now, the only requirement on the
off-diagonal elements of $S$ is that they be sufficiently positive to filter
out the unwanted inequalities. Also, from the form of
\cref{eq:epaps:constructed_Q}, the columns in any individual $4\times 4$ block
of $Q$ sum to the same value. Thus, by adjusting the elements of $S$, we can
ensure that all columns of $Q-k\id$ sum to the \emph{same} positive value,
$\sigma$ say. Choosing $k=-\sigma$, the negative on-diagonal element in each
column generated by the $k\id=-\sigma\id$ term will cancel the positive
contribution from the other terms, thereby satisfying the normalisation
condition, as required.

Finally, we must ensure that the second ccp condition from
\cref{eq:epaps:special_case_ccp2} is always satisfied, for which we require
the following Lemma:
\begin{lemma}
  \label{lem:epaps:Q+P}
  If $Q=-k\id$ is $d$-dimensional, then for any real $k$ there exists a matrix
  $P$ such that $\diag P = 0$ and
  \begin{equation}
    (\id-\vec{w}\vec{w}^T)(Q+P)(\id-\vec{w}\vec{w}^T)\geq 0,
  \end{equation}
  where $\vec{w} = (1,1,\dots,1)^T/\sqrt{d}$.
\end{lemma}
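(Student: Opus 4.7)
The plan is to construct $P$ explicitly via a simple one-parameter ansatz rather than invoke any abstract existence argument. Writing $\Pi = \id - \vec{w}\vec{w}^T$ for the orthogonal projector onto $\vec{w}^\perp$, the hypothesis $Q = -k\id$ and the idempotence $\Pi^2 = \Pi$ reduce the desired inequality to $\Pi P\Pi \geq k\Pi$. So the whole task becomes: find a matrix with vanishing diagonal whose compression to $\vec{w}^\perp$ dominates $k$ times the identity on that subspace.

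First I would look for $P$ in the most symmetric family consistent with the zero-diagonal constraint: a linear combination of the all-ones matrix $J = d\,\vec{w}\vec{w}^T$ and the identity. The requirement $\diag P = 0$ forces the combination $P = \alpha(J - \id)$ for a single free scalar $\alpha$. This choice is very natural because $J$ interacts cleanly with $\Pi$: since $\Pi\vec{w} = 0$, the rank-one piece is annihilated on both sides, $\Pi J\Pi = 0$, and so $\Pi P\Pi = -\alpha\,\Pi$.

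The remaining step is to pick $\alpha$. Substituting back, $\Pi(Q+P)\Pi = (-k - \alpha)\Pi$, which is positive semidefinite iff $\alpha \leq -k$. Taking, say, $\alpha = -k$ produces $\Pi(Q+P)\Pi = 0 \geq 0$ and finishes the construction; the same choice works uniformly for every real $k$, so no case distinction on the sign of $k$ is needed.

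I do not foresee any real obstacle: the only point that deserves a sentence of justification in the final write-up is the identity $\Pi J\Pi = 0$, which is immediate from $\vec{w}^T\Pi = 0$. Everything else is a one-line verification, so the ``proof'' is really just the explicit exhibition of $P = -k(J-\id)$, i.e.\ $P_{ij} = -k$ for $i\neq j$ and $P_{ii}=0$.
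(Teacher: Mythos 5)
Your construction is correct and is in fact the same one as the paper's: the paper's choice $P=\alpha(\id-\vec{w}\vec{w}^T)+\alpha(1-d)\vec{w}\vec{w}^T$ simplifies to $\alpha(\id-J)$ with $J$ the all-ones matrix, which is exactly your one-parameter family up to the sign of the parameter, and both arguments conclude by computing $\Pi(Q+P)\Pi$ as a multiple of $\Pi$. Your explicit endpoint choice $P_{ij}=-k$ for $i\neq j$ corresponds to the paper's boundary case $\alpha=k$, so the two proofs are essentially identical.
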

\begin{proof}
  Choose $P = \alpha(\id-\vec{w}\vec{w}^T) +
  \alpha(1-d)\vec{w}\vec{w}^T$. Then the diagonal elements of $P$ are
  \begin{equation}
    P_{i,i}
    = \alpha\left(1-\frac{1}{d}\right) + \alpha(1-d)\frac{1}{d}
    = 0,
  \end{equation}
  and
  \begin{equation}
    (\id-\vec{w}\vec{w}^T)(Q+P)(\id-\vec{w}\vec{w}^T)
    = (\alpha-k)(\id-\vec{w}\vec{w}^T),
  \end{equation}
  which is positive semi-definite for $\alpha\geq k$.
\end{proof}
The coefficients $P_{i,j}$ in \cref{eq:epaps:special_case_L0} can be chosen
freely, since they play no role in either the normalisation or in encoding
\csproblem{1-in-3SAT}, so the $[\offdiag P]$ term in the ccp condition of
\cref{eq:epaps:special_case_ccp2} can be chosen to be any matrix with zeros
down its main diagonal. Also, from \cref{eq:epaps:constructed_Q}, all diagonal
elements of $Q$ are equal to $k = -\sigma$. Thus
\cref{eq:epaps:special_case_ccp2} is exactly of the form given in
\cref{lem:epaps:Q+P}, and choosing $P$ accordingly ensures that it is always
satisfied. Furthermore, since gives a $P^\Gamma$ that is Hermitian,
condition~(i) of the main text is automatically also satisfied.

We have constructed $L_0$ and $A^{(c)}$ such that there exists an $L_m$
satisfying conditions~(i), (ii)
and~(iii) % cond:L_ccp,cond:L_normalisation,cond:L_Hermiticity
from page~3 of the main text if (and only if) the original
\csproblem{1-in-3SAT} instance was satisfiable. But we have already shown that
condition~(iii), % cond:L_ccp
along with conditions~(i) and~(ii), % cond:L_Hermiticity,cond:L_normalisation
are satisfied if (and only if) $L_m$ is of Lindblad form, which in turn is
equivalent to $E=e^{L_m}=e^{L_0}$ being Markovian.

Furthermore, the integer solutions of \cref{eq:epaps:inequalities} are
insensitive to small perturbations of the coefficients and constants, so any
sufficiently good approximation $E'$ will still be Markovian if $E$ is, and
vice versa, as long as we impose sufficient precision requirements. Indeed, it
is natural to expect that if a snapshot $E$ is close to being Markovian, it
will have a generator $L_m$ that is close to being of Lindblad form. Making
this rigorous is less trivial, but follows from continuity properties of the
matrix exponential \citesupporting{Horn+Johnson_Topics} and logarithm
\citesupporting{Weilenmann}. The Markovianity problem is therefore equivalent
to the problem of determining whether any $L_m$ obeys the three conditions~(i)
to~(iii), % cond:L_Hermiticity,cond:L_normalisation,cond:L_ccp
up to the necessary approximation accuracy. Thus we have successfully encoded
\csproblem{1-in-3SAT} into the Liouvillian problem, such that the
corresponding snapshot $E$ is Markovian if (and only if) the
\csproblem{1-in-3SAT} instance was satisfiable.

Using standard perturbation theory results for eigenvalues and eigenvectors
\citesupporting{Horn+Johnson,Golub+vanLoan}, a careful analysis reveals that a
precision of $\order{V^{-1}(C+2V)^{-3}}$ is sufficient, which scales only
polynomially with the number of degrees of freedom in the system (i.e., with
the size of the Liouvillian matrix). Though a polynomial scaling is not
strictly speaking necessary to prove NP-hardness, it makes the result more
compelling, as it shows that the complexity does not result from demanding
unreasonable precision requirements. This is sometimes called \keyword{strong
  NP-hardness} of a weak-membership problem (cf.\
Ref.~\citesupporting{Gharibian}).

This so-called \keyword{weak-membership} formulation of the problem---allowing
for approximate answers---is vital if the question is to be reasonable from an
experimental perspective: the snapshot $E$ can only be measured up to some
experimental error. Allowing for approximate answers can only make the problem
easier than requiring an exact answer, so the fact that the problem remains
NP-hard even for finite (even polynomial) precision is crucial to the
experimental relevance of the hardness result. In fact, the weak-membership
formulation is also necessary from a theoretical perspective. If $E$ happened
to be close to the boundary of the set of Markovian maps, then it would be
close to both Markovian and non-Markovian maps, and an exact answer could
require the matrix elements of $E$ to be specified to infinite precision,
which is not reasonable even theoretically.

\subsection*{Several snapshots}
Clearly, if we can \emph{find} a set of dynamical equations whenever they
exist, we can also determine \emph{whether} they exist. So finding the
dynamical equations is at least as hard as answering the existence question.
For a single snapshot, the latter is just the Markovianity problem again. But,
having constructed $L_0$ and $A^{(c)}$ as described above, it is easy to
generalise this to any number of snapshots $\mathcal{E}_t$: simply take $E_t =
e^{L_0 t}$ for as many different times $t$ as desired.

\subsection*{The classical setting}
The analogue of the Markovianity problem in the classical setting is known as
the \keyword{embedding problem}. Given a stochastic matrix, this asks whether
it can be generated by any continuous, time-homogeneous Markov process (i.e.,
by dynamics obeying a time-independent classical master equation). The quantum
mechanical proof described above does not directly imply anything about the
classical problem (nor vice versa). Nevertheless, it turns out that the
arguments used in the quantum setting can readily be adapted to the classical
embedding problem.

We can reduce the embedding problem to a question about the (classical)
Liouvillian, in the same way as in the quantum case. Comparing the conditions
for $L$ to be a valid classical Liouvillian (see conditions~(i) and~(ii) on
page~1 of the main text) with the matrices $Q$ and $B^{(c)}$ from
\cref{eq:epaps:special_case_Q,eq:epaps:special_case_Bc}, we see that $Q+\sum
m_c B^{(c)}$ is a valid \emph{classical} Liouvillian if and only if the
\csproblem{1-in-3SAT} problem was satisfiable. In other words, for the
classical case, we simply need to use the matrices $Q$ and $B^{(c)}$, rather
than the full matrices $L_0$ and $A^{(c)}$ used in the quantum construction.
The rest of the arguments proceed as in the quantum case, thereby proving that
the embedding problem too is NP-hard.

\bibliographystylesupporting{apsrev4-1}
\bibliographysupporting{Markov_complexity}

\onecolumngrid  % force rebalancing
\clearpage
\twocolumngrid

\end{document}